\def\bbbz{{\mathbb Z}}
\def\cp1{{\mathbb C\mathbb P}^1}
\newtheorem{Def}{Definition}
\newtheorem{Thm}{Theorem}
\def\cE{{\cal S}}
\def\bbbz{{\mathbb Z}}
\def\cp1{{\mathbb C\mathbb P}^1}
\def\cE{{\cal S}}
\newcommand\cS{{\mathcal S}}
\numberwithin{equation}{section}
\numberwithin{figure}{section}\usepackage[small]{caption}
\begin{document}
\bibliographystyle{unsrt} 
\title {Master symmetry and time dependent symmetries of the
differential-difference KP
equation
}
\author{Farbod Khanizadeh\\
School of Mathematics, Statistics \& Actuarial Science\\
University of Kent,
Canterbury, UK}
\date{}
\maketitle
\begin{abstract}
We first obtain the master symmetry of the differential-difference KP equation.Then we show how this master symmetry, through $sl(2,\mathbb{C})$-representation of the equation, can construct generators of time dependent symmetries. 
\end{abstract}
\section{Introduction}
The differential-difference KP (DDKP) equation is a lattice field equation of the following form \cite{nijhoff1984backlund, etu, nijhoff1986direct, tamizhmani1998wronskian}:
   \begin{eqnarray}\label{BO}
   u_{t}=u_{xx}+2uu_{x}+2(\cS-1)^{-1}u_{xx}=K\,,
   \end{eqnarray}
 where the dependent variable $u:=u(n,x,t)$ is a smooth function of continuous varibles $x$, $t$ and 
discrete variable $n\in \bbbz$.  We use the notation $u_{m,j}=:D_{x}^{j}u(n+m,t,x)=\cS^{m}D_{x}^{j}u(n,t,x)$ where $\cS$ is the linear shift operator and $D_{x}$ denotes the derivative with respect to $x$.
 The DDKP equation is also appeared in \cite{BS01, vel1997lax}. 
  In \cite{BS01} it is constructed through the procedure of central extension and the author refers to the equation as the lattice field Benjamin-Ono. For the links between hierarchy of equation (\ref{BO}) and the continuous Kadomtsev–-Petviashvili hierarchy one can read \cite{fu2013integrability}.

 Notice that equation (\ref{BO}) involves $(\cS-1)^{-1}$ and is no longer local.
The nonlocality is the main obstacle to directly extend the theories of (1+1)-dimensional nonlinear evolutionary equations 
to (2+1)-dimensional equations. 

Master symmetries are tools by which, through a recursive process, one can construct a hierarchy of infinitely many commuting generalised symmetries and guarantee the integrability.  The notion of master symmetry was first introduced in \cite{  mr83f:58039} and later studied in \cite{ mr86c:58158,  mr83f:58043}. The efficiency of these tools can be clearly seen in the (2+1)-dimensional case for which recursion operators of equations  have more complicated structure than in (1+1)-dimension \cite{FS88b, FS88c}.


Master symmetries of (1+1)-dimensional differential and differential-difference equations are not always local. For example the KdV
\begin{eqnarray*}
u_{t}=u_{xxx}+6uu_{x}\,,
\end{eqnarray*}
admits the following nonlocal master symmetry\cite{ dorf87, fok87, oev84}:
\begin{eqnarray}\label{mesal1}
\tau=x(6uu_{x}+u_{xxx})+4u_{xx}+8u^2+2u_{x}D^{-1}_{x}u\,.
\end{eqnarray}
Master symmetries are also related closely to  time dependent symmetries of evolutinary equations. In \cite{ mr86c:58158, mr1874291} this relation is discussed along with examples of continous case. It is shown having a master symmetry of evolution equation, one can produce an infinite number of time dependent symmetries which are polynomial in time.

In this paper our focus is on the DDKP equation. We write the equation in the form in which a quasi-local term appears. Then we introduce briefly the concept of master symmetries and derive the master symmetry of the equation. The paper ends with the application of the master symmetry in constructing generators of the time dependent symmetries of the DDKP equation.   This relation is illustrated through the sl$(2,\mathbb{C})$-representation of the  DDKP equation.

\section{The differential-difference KP equation and its master symmetry}\label{se}
We can rewrite the differential-difference KP (DDKP) equation (\ref{BO}) in the following form: 
   \begin{eqnarray}\label{BOO}
   u_{t}=u_{xx}+2uu_{x}+2\Theta(u_{x})=K\,,
   \end{eqnarray}
where $\Theta:=(\cS-1)^{-1}D_{x}$ and $\Theta(u_{x})$ is called a quasi-local function.

The concept of quasi-local functions was first introduced for the continous case in 1998 by Mikhailov and Yamilov \cite{mr1643816}.
They noticed that the operators $D_x^{-1}$ and $D_y^{-1}$ never appear alone
but always in pairs like $D_x^{-1}D_y$ and $D_y^{-1}D_x$ for
all known integrable equations and their hierarchies of symmetries
and conservation laws. 
In \cite{wang21}, using the symbolic representation, the author proved this 
observation is true for certain type of integrable equations.
Here we denote ${\hat{\mathfrak{G}}}$ for the space of quasi-local polynomials which depend on $u_{i,j}$ and contain the operator $\Theta$. We also define ${{\mathfrak{G}}}$ as the extension of ${\hat{\mathfrak{G}}}$ to the space with coefficients from the set $\{n,x\}$. For more details and discussion on the structure of  ${\hat{\mathfrak{G}}}$ and ${{\mathfrak{G}}}$ one can read \cite{Farbod1}.

For two elements $F$ and $G$ in ${{\mathfrak{G}}}$ we define the Lie bracket
 \begin{eqnarray}\label{2symm}
 [F,G]:=F_{*}(G)-G_{*}(F)\,,
 \end{eqnarray}
where $F_{*}$ is called the Fr\'{e}chet derivative of $F$ and is defined by:
 \begin{eqnarray}\label{ahan}
 F_{*}=\sum_{i,j}\frac{\partial F}{\partial u_{i,j}}\cE^{i}D_{x}^{j}\,.
 \end{eqnarray}
 The anti-symmetric axiom is clear and bilinearity is deduced from the linearity of derivations and the shift operator. One can also prove the Jacobi identity through a direct calculation.
\begin{Def}
We call an element $G\in{\hat{\mathfrak{G}}}$ a generalised symmetry of equation (\ref{BOO}) if and only if $ [K,G]=0\,.$
 Furthermore equation (\ref{BOO})  is called integrable if and only if it possesses
infinitely many commuting generalised symmetries.
\end{Def}

As mentioned in the introduction, the master symmetry is a quantity by which one can construct a hierarchy of commuting symmetries. 
 For a comprehensive description of  master symmetries one can read \cite{ mr83f:58039, mr86c:58158, fuchssteiner1985mastersymmetries,mr2000e:37104, flack1, oef82}.

\begin{Def}\label{newmaster1}
An element $W\in\mathfrak{G}$ is called a master symmetry of equation (\ref{BOO}) if and only if
\begin{eqnarray}\label{newmaster2}
[K,[K,W]]=0,\qquad [W,K]\in\hat{\mathfrak{G}}\,.
\end{eqnarray}
\end{Def}
 In practice we are interested to find a $W$ such that  its recursive action on $K$ is well-defined and produces a new element in $\hat{\mathfrak{G}}$. 
In the following theorem we provide the explicit form of the master symmetry of equation (\ref{BOO}). 
\begin{Thm}\label{miss}
The element $W\in\mathfrak{G}$ of the form
\begin{eqnarray}\label{BOmaster}
W=xu_{xx}+2xuu_{x}+2x\Theta(u_{x})+nu_{x}+u^2+3\Theta(u)\,,
\end{eqnarray}
is the master symmetry of the DDKP equation (\ref{BOO}). Furthermore the recursive action of $W$ on the equation produces a commutative space of symmetries.
\end{Thm}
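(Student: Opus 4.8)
The plan is to exploit the decomposition
\[
W=x\bigl(u_{xx}+2uu_x+2\Theta(u_x)\bigr)+nu_x+u^2+3\Theta(u)=xK+\Psi,\qquad \Psi:=nu_x+u^2+3\Theta(u),
\]
so that by bilinearity $[W,K]=[xK,K]+[\Psi,K]$. In the Fr\'echet derivative \eqref{ahan} the explicit factors $x$ and $n$ are inert under $\partial/\partial u_{i,j}$, giving $(xK)_{*}=xK_{*}$, whereas when $K_{*}$ and $\Psi_{*}$ act back on these factors the operators $\cS^{i}D_{x}^{j}$ do see them, through $D_{x}(x\,\cdot)=xD_{x}+1$ and $\cS^{i}(n\,\cdot)=(n+i)\cS^{i}$. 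The first condition to prove, $[W,K]\in\hat{\mathfrak G}$, is therefore a cancellation statement: the explicit-$x$ contributions cancel between $(xK)_{*}(K)=xK_{*}(K)$ and the $x$-proportional part of $K_{*}(xK)$, while the explicit-$n$ contributions cancel using $K_{*}(u_{x})=D_{x}K$ ($x$-translation invariance of $K$), the shifts $\cS^{i}$ leaving only a quasi-local remainder. What survives is a genuinely quasi-local $K_{1}:=[W,K]\in\hat{\mathfrak G}$, which I expect to be the next DDKP flow (one order higher in $D_{x}$, with its own $\Theta$-terms).

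The second condition, $[K,[K,W]]=0$, is equivalent to $[K,K_{1}]=0$, i.e. that the computed $K_{1}$ is an honest generalised symmetry. Once $K_{1}$ is in hand this is a finite verification using \eqref{2symm} and \eqref{ahan}: one forms $K_{*}(K_{1})-(K_{1})_{*}(K)$ and checks it vanishes, or equivalently recognises $K_{1}$ as a member of the commuting DDKP hierarchy. Together the two checks establish that $W$ is a master symmetry in the sense of Definition \ref{newmaster1}.

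For the second assertion, set $K_{0}=K$ and $K_{m+1}=[W,K_{m}]$. I would first show, by the same $x,n$-cancellation and induction on $m$, that each $K_{m}\in\hat{\mathfrak G}$ is well defined. The symmetry property is then bootstrapped with the Jacobi identity and $[K,W]=-K_{1}$: for instance
\[
[K,K_{2}]=\bigl[[K,W],K_{1}\bigr]+\bigl[W,[K,K_{1}]\bigr]=[-K_{1},K_{1}]+[W,0]=0,
\]
so $K_{2}$ is automatically a symmetry. The difficulty appears at the next level, where $[K,K_{3}]=-[K_{1},K_{2}]$, and Jacobi returns only the tautology $[K_{1},K_{2}]=[K_{1},K_{2}]$: the vanishing of the cross brackets $[K_{i},K_{j}]$ with $i,j\ge1$ does \emph{not} follow from the two defining relations alone. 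This is the main obstacle.

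To overcome it I would invoke the integrable (hereditary) structure rather than the bare Lie algebra. Assigning the scaling weights $x\sim1$, $u\sim-1$, $\cS\sim0$ (so that the bracket of a weight-$a$ and a weight-$b$ element has weight $a+b+1$), $K$ is homogeneous of weight $-3$ and $W$ of weight $-2$, whence $K_{m}$ has weight $-(3+m)$ and each $[K_{i},K_{j}]$ has the weight of $K_{i+j+2}$. Since $[K,K_{i}]=[K,K_{j}]=0$ force $[K,[K_{i},K_{j}]]=0$ by Jacobi, every cross bracket is a symmetry of a definite weight; the decisive input is that these weight-graded symmetry spaces are one dimensional, so that $[K_{i},K_{j}]=c_{ij}K_{i+j+2}$. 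Establishing this---equivalently, producing a recursion operator $R$ with $[W,K_{m}]=(\alpha+\beta m)K_{m+1}$ and verifying its hereditariness, so that the single identity $[K_{0},K_{1}]=0$ propagates to $[K_{i},K_{j}]=0$ for all $i,j$---is the crux of the argument and the step I expect to be hardest. The resulting commuting tower, together with the time-dependent symmetry $W-t\,[W,K]$, is what the $sl(2,\bbbc)$-representation then organises.
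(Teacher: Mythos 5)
Your plan for the first half of the theorem takes a genuinely different route from the paper, and it would likely work, but it is more labour than is needed. The paper does not argue structurally that the explicit $x$- and $n$-dependence cancels in $[W,K]$; instead it uses the fact that the second DDKP flow is already known: the symmetry $G$ of equation (\ref{BOO}) is written out explicitly in (\ref{BOsymmetry}) (taken from \cite{BS01}), and the proof consists of two finite Fr\'echet-derivative computations, $[K,G]=0$ and $[W,K]=-2G$. This simultaneously gives $[W,K]\in\hat{\mathfrak{G}}$ (since $G$ is manifestly quasi-local) and $[K,[K,W]]=2[K,G]=0$, i.e.\ both conditions of Definition \ref{newmaster1}. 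Your expectation that $K_{1}:=[W,K]$ is ``the next DDKP flow'' is exactly right (up to the factor $-2$), and your weight bookkeeping ($K$ of weight $-3$, $W$ of weight $-2$, the bracket adding $1$) is correct; the paper simply short-circuits the cancellation analysis by exhibiting the target $G$ in advance.

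The genuine gap is in the second half. You correctly diagnose the crux --- commutativity $[K_{i},K_{j}]=0$ for $i,j\ge 1$ does not follow from the two defining relations plus the Jacobi identity --- and this diagnosis is precisely why the paper does not attempt a bare Lie-algebra argument. But your proposed repair is left as an assumption: you need either that the symmetry space at each fixed scaling weight is one dimensional, or that a recursion operator exists and is hereditary, and you prove neither. Both are serious obstacles here: for nonlocal $(2+1)$-dimensional equations such as DDKP, recursion operators are exactly the objects the paper's introduction flags as having a more complicated structure, and one-dimensionality of the graded symmetry spaces is a classification statement at least as hard as the theorem itself. The paper fills this gap differently: it invokes Dorfman's $\tau$-scheme \cite{mr94j:58081}, whose hypotheses are verified for this particular $W$ in \cite{Farbod1}; that scheme supplies exactly the graded-algebra induction you sketch, but with checkable conditions on $W$ and $K$ in place of your unproven one-dimensionality input. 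So, as a standalone argument, your proposal establishes (modulo carrying out the computations) that $W$ is a master symmetry in the sense of Definition \ref{newmaster1}, but it does not prove the theorem's second assertion; that step needs the $\tau$-scheme or an equivalent result, not just the grading.
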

\begin{proof}
According to Definition \ref{newmaster1}, to say $W$ is a master symmetry we need to show $[K,[W,K]]=0$ where $[W,K]$ lies in the space $\hat{\mathfrak{G}}$. This implies that $[W,K]$ should be a generalised symmetry of equation (\ref{BOO}). The first non-trivial symmetry of (\ref{BOO}) is given in \cite{BS01}. We rewrite it in our notation as follows:
\begin{eqnarray}\label{BOsymmetry}
G&=&u_{xxx}+3uu_{xx}+3u_{x}^2+3u^2u_{x}+3\Theta(uu_{x})+3u_{x}\Theta(u)\nonumber\\
&+&3u\Theta(u_{x})+3\Theta^2(u_{x})+3\Theta(u_{xx})\,.
\end{eqnarray}
From (\ref{ahan}) one can check that the bracket $[K,G]$ vanishes and $[W,K]=-2G$, which proves the first part.

The second part of the theorem is thoroughly discussed in \cite{Farbod1}. There, the author refers to the theorem namely $\tau$-scheme by I. Dorfman \cite{mr94j:58081} and shows how $W$ meets the conditions of  the $\tau$-scheme. 
\end{proof}
In the next section we give the description of an irreducible $sl(2,C)$-representation by which we can construct the time dependent symmetries of the DDKP equation.

\section{Time dependent symmetries}\label{panj}
 In \cite{mr1874291} it was shown how the representation of Burgers, Ibragimov-Shabat and KP equation can help one to construct the time dependent symmetries.
  In what follows, using the Lie algebra $sl(2,\mathbb{C})$  we present such a construction for equation (\ref{BOO}). We call $G(u,t)$ a time dependent symmetry of (\ref{BOO}) if and only if
\begin{eqnarray}\label{tdep}
\frac{\partial G}{\partial t}=[G,K]\,.
\end{eqnarray}
We find an appropriate element $G_{0}\in\hat{\mathfrak{G}}$ such that for some $\ell$ we get $ad_{K}^{\ell}(G_{0})=0$, where
\begin{eqnarray*}
 ad_{K}^{\ell}G_{0}:=[\underbrace{K\cdots,[K,[K}_{\ell-times},G_{0}]]]\,,
\end{eqnarray*}
 Then it can be shown $G(u,t)=\exp(-t ad_{K})(G_{0})$ is a time dependent symmetry of equation (\ref{BOO}) \cite{mr86c:58158, mr1874291}. To begin with, we consider
\begin{equation}\label{sl2}
M=K,~~~N=-\frac{1}{2}x,~~~H=-xu_{x}-u\,.
\end{equation}
Calculating the Lie brackets yields
\begin{equation}\label{relation}
[M,N]=H,~~~[H,N]=-2N,~~~[H,M]=2M\,.
\end{equation}
Therefore we have the following Lie algebra isomorphisms:
$$sl(2,C)\cong span\{M,N,H\}$$
We know from the representation theory of $sl(2,\mathbb{C})$ (more details can be found in \cite{ springerfarbod, humphreys1972introduction}) that if $V$ is a finite dimensional irreducible representation, then there is a vector $\upsilon_{0}\in V$ where $H\upsilon_{0}=\lambda v_{0}$ and $M\upsilon_{0}=0$ and furthermore $V$ is spanned by the $\lambda+1$ linear independent vectors $\{\upsilon_{0},ad_{N}\upsilon_{0},ad_{N}^2\upsilon_{0},\cdots,ad_{N}^{\lambda}\upsilon_{0}\}$. 
 The element $\upsilon_{0}$ is called the highest weight vector.
 
Since $[M,K]=0$ and $[H,K]=2K$, by setting $\upsilon_{0}=K$ we obtain a three dimensional representation of $sl(2,\mathbb{C})$ spanned by $V_{3}=span\{K,ad_{N}K,ad^{2}_{N}K\}$. 
  Now let $N_{2}$ be the master symmetry of the DDKP equation. Hence
 $$[M,ad^{m}_{N_{2}}K]=0\,.$$
 The following theorem provides the basis of  $sl(2,\mathbb{C})$-representations with different dimension. For the proof one can read \cite{mr1874291}.
 \begin{Thm}
 For elements defined in (\ref{sl2}) the following relation holds:
 \begin{eqnarray*}
 [H,ad^{m}_{N_{2}}K]=(m+2)ad^{m}_{N_{2}}K\,.
 \end{eqnarray*}
 \end{Thm}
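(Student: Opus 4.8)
The plan is to argue by induction on $m$, using only the Jacobi identity for the bracket (\ref{2symm}) together with a single homogeneity property of the master symmetry. The base case $m=0$ is precisely the relation $[H,K]=2K$ recorded above, which reads $[H,ad^{0}_{N_{2}}K]=2\,ad^{0}_{N_{2}}K$.

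For the inductive step I would write $ad^{m+1}_{N_{2}}K=[N_{2},ad^{m}_{N_{2}}K]$ and expand by Jacobi,
\begin{eqnarray*}
[H,ad^{m+1}_{N_{2}}K]=[[H,N_{2}],ad^{m}_{N_{2}}K]+[N_{2},[H,ad^{m}_{N_{2}}K]]\,.
\end{eqnarray*}
Under the inductive hypothesis $[H,ad^{m}_{N_{2}}K]=(m+2)\,ad^{m}_{N_{2}}K$ the second term equals $(m+2)\,ad^{m+1}_{N_{2}}K$, so the whole assertion collapses to the single claim
\begin{eqnarray*}
[H,N_{2}]=N_{2}\,.
\end{eqnarray*}
Granting this, the first term contributes one further copy of $ad^{m+1}_{N_{2}}K$, and summing gives $(m+3)\,ad^{m+1}_{N_{2}}K=((m+1)+2)\,ad^{m+1}_{N_{2}}K$, closing the induction.

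The crux, and the step I expect to be the main obstacle, is therefore verifying $[H,N_{2}]=N_{2}$ for the explicit master symmetry $N_{2}=W$ of (\ref{BOmaster}). The conceptual way to see this is to note that $ad_{H}=[H,\,\cdot\,]$ acts as a weighted-degree counting operator. Using $H_{*}=-xD_{x}-1$ coming from (\ref{ahan}), one checks directly that $[H,u_{i,j}]=j\,u_{i,j}$, $[H,x]=-2x$ and $[H,n]=-n$, that $\Theta$ raises this weight by one (indeed $[H,\Theta(u)]=\Theta(u)$), and that on a product of functionals one has
\begin{eqnarray*}
[H,FG]=[H,F]\,G+F\,[H,G]+FG\,,
\end{eqnarray*}
with the extra summand $FG$ reflecting the inhomogeneous term $-u$ in $H$. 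With these rules each monomial of $W$, namely $xu_{xx}$, $xuu_{x}$, $x\Theta(u_{x})$, $nu_{x}$, $u^{2}$ and $\Theta(u)$, turns out to carry weight exactly $1$, whence $[H,W]=W$.

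The one genuinely delicate point is the interaction of $ad_{H}$ with the nonlocal operator $\Theta=(\cS-1)^{-1}D_{x}$: here one must use that $(\cS-1)^{-1}$ commutes with multiplication by $x$, since $x$ is independent of $n$, which produces identities such as $\Theta(xu_{xx})=x\Theta(u_{xx})+\Theta(u_{x})$ needed to handle the terms $xu_{xx}$ and $x\Theta(u_{x})$. If one prefers to avoid weight bookkeeping altogether, $[H,W]=W$ can instead be obtained by a direct evaluation of $H_{*}(W)$ and $W_{*}(H)$ from (\ref{ahan}), in exactly the manner of the verification of $[H,K]=2K$; this route is longer but entirely mechanical. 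Either way the key claim follows, and the induction above then completes the proof.
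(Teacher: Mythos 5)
Your proof is correct, but there is nothing in the paper to compare it against: the paper offers no proof of this theorem at all, deferring entirely to the reference \cite{mr1874291} (``For the proof one can read...''). So your argument does not diverge from an in-paper proof; it supplies one. The structure you use --- induction on $m$ via the Jacobi identity (which the paper asserts holds for the bracket (\ref{2symm})), with base case $[H,K]=2K$ from (\ref{relation}) --- correctly reduces the whole theorem to the single identity $[H,N_{2}]=N_{2}$, and that identity does hold for the $W$ of (\ref{BOmaster}). I verified it directly: with $H_{*}=-xD_{x}-1$ and $W_{*}=xD_{x}^{2}+2xu_{x}+2xuD_{x}+2x\Theta D_{x}+nD_{x}+2u+3\Theta$, and using the commutation identities $\Theta(xu_{xx})=x\Theta(u_{xx})+\Theta(u_{x})$ and $\Theta(xu_{x})=x\Theta(u_{x})+\Theta(u)$ (valid because $x$ does not depend on $n$, so it passes through $(\cS-1)^{-1}$), one finds $H_{*}(W)-W_{*}(H)=W$ exactly. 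Your weight bookkeeping is likewise sound: the anomalous Leibniz rule $[H,FG]=[H,F]G+F[H,G]+FG$, the weights $[H,u_{i,j}]=ju_{i,j}$, $[H,x]=-2x$, $[H,n]=-n$, and the rule $[H,\Theta F]=\Theta F+\Theta([H,F])$ (so $\Theta$ raises weight by one) all follow from (\ref{ahan}), and each of the six monomials of $W$ then carries weight exactly $1$. This is the standard mechanism --- a master symmetry of scaling weight one generates the $sl(2,\mathbb{C})$ grading --- and is presumably what the cited reference does for its continuous examples; the genuine content your write-up adds is the verification of the weight-one property for the specific quasi-local $W$ of the DDKP equation, which is the only equation-specific input the theorem requires.
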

This theorem implies that for different value of $m$, $ad^{m}_{N_{2}}K$ is a highest weight vector of the $(m+3)$-dimensional representation $$V_{m+3}=span\{ad^{m}_{N_{2}}K,ad_{N}ad^{m}_{N_{2}}K,ad^{2}_{N}ad^{m}_{N_{2}}K,\cdots,ad_{N}^{m+2}ad^{m}_{N_{2}}K\}\,,$$
 where
 \begin{eqnarray*}
 ad^{m_2}_{N}ad^{m_1}_{N_{2}}K=[\underbrace{N,[N,\cdots,[N}_{m_{2}-times},[\overbrace{N_2,[N_2,\cdots[N_{2}}^{m_{1}-times},K]]]]]]\,.
 \end{eqnarray*}
 Therefore starting from $u_{x}$ we can draw the following diagram for the DDKP equation in which the first element of each row is the symmetry of equation (\ref{BOO}) and the horizontal lines contain the representations of $sl(2,\mathbb{C})$:\vspace{10mm}\\

$\begin{CD}
u_{x}@>N>>ad_{N}u_{x}\\
@VVN_{2}V\\
K @>N>>ad_{N}K @>N>>ad^{2}_{N}K\\
@VVN_{2}V \\
ad_{N_{2}}K @>N>> ad_{N}ad_{N_{2}}K @>N>> ad^{2}_{N}ad_{N_{2}}K @>N>> ad^{3}_{N}ad_{N_{2}}K \\
 @VVN_{2}V\\
ad^{2}_{N_{2}}K @>N>> ad_{N}ad^{2}_{N_{2}}K  @>N>> ad^{2}_{N}ad^{2}_{N_{2}}K  @>N>> ad^{3}_{N}ad^{2}_{N_{2}}K  @>N>> ad^{4}_{N}ad^{2}_{N_{2}}K \\
@VVN_{2}V\\
 \vdots
\vspace{5mm}
\end{CD}$
\vspace{5mm}\\
In the following theorem we see that horizontal lines are not just elements of the representation but  generators for time dependent symmetries of equation (\ref{BOO}).
  \begin{Thm}
   Consider $ad_{N_{2}}^mK\in V_{m+3}$, then the basis elements
   $$ad_{N}^{\ell}ad_{N_{2}}^{m}K\,,$$
 commute with $K^{\ell+1}, (0\leq \ell\leq m+2)$. In other words we have 
  \begin{eqnarray*}
  ad^{\ell+1}_{K}(ad_{N}^{\ell}ad^{m}_{N_{2}}K)=0\,.
  \end{eqnarray*}
  \end{Thm}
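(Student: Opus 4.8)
The plan is to recognize the statement as a purely representation-theoretic fact about the $sl(2,\mathbb{C})$ generated by $\{M,N,H\}$, and to settle it by counting weights. Under the identification $e\leftrightarrow M=K$, $f\leftrightarrow N$, $h\leftrightarrow H$, the relations (\ref{relation}) are precisely the standard $sl(2,\mathbb{C})$ commutators, so in any module assembled from these brackets $M$ acts as the raising operator and $N$ as the lowering operator, while $H$ measures the weight. The whole claim $ad_{K}^{\ell+1}(ad_{N}^{\ell}ad_{N_{2}}^{m}K)=0$ will then follow from the single observation that applying the raising operator one time too many pushes a vector above the highest weight of a finite-dimensional module, where nothing can live.

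First I would fix $m$ and write $v_{0}:=ad_{N_{2}}^{m}K$. By the preceding theorem $[H,v_{0}]=(m+2)v_{0}$, and by construction $[M,v_{0}]=0$; hence $v_{0}$ is a highest weight vector of integer weight $\lambda=m+2$, and $V_{m+3}=\mbox{span}\{v_{0},ad_{N}v_{0},\ldots,ad_{N}^{m+2}v_{0}\}$ is the associated irreducible $(m+3)$-dimensional representation. The essential structural consequence I would record here is that $V_{m+3}$ is invariant under the full action, in particular under $ad_{M}=ad_{K}$, so every element produced below remains inside this finite-dimensional module and weight arguments apply to it.

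Next I would track weights. Since $[H,N]=-2N$, the operator $ad_{N}=f$ lowers the $H$-eigenvalue by $2$, so $ad_{N}^{\ell}v_{0}$ is a weight vector of weight $\lambda-2\ell=(m+2)-2\ell$. Since $[H,M]=2M$, each application of $ad_{K}=ad_{M}=e$ raises the weight by $2$; thus if $ad_{K}^{\ell+1}(ad_{N}^{\ell}v_{0})$ were nonzero it would be a weight vector of weight $(\lambda-2\ell)+2(\ell+1)=\lambda+2$. But $\lambda=m+2$ is the top weight occurring in the irreducible module $V_{m+3}$, which therefore contains no nonzero vector of weight $\lambda+2$. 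Hence $ad_{K}^{\ell+1}(ad_{N}^{\ell}ad_{N_{2}}^{m}K)=0$ for every $\ell$ with $0\leq\ell\leq m+2$, which is exactly the assertion; for these $\ell$ the vectors $ad_{N}^{\ell}v_{0}$ range over the genuine basis of $V_{m+3}$, from weight $\lambda$ down to weight $-\lambda$.

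As a cross-check, and an alternative avoiding a direct appeal to irreducibility, I would iterate the classical $sl(2,\mathbb{C})$ identity $e\,f^{k}v_{0}=k(\lambda-k+1)f^{k-1}v_{0}$, valid for a highest weight vector $v_{0}$ of weight $\lambda$. Applying $ad_{M}$ to $ad_{N}^{\ell}v_{0}$ a total of $\ell$ times collapses it to a scalar multiple of $v_{0}$, namely $ad_{M}^{\ell}(ad_{N}^{\ell}v_{0})=c_{\ell}\,v_{0}$ with $c_{\ell}=\prod_{j=1}^{\ell}j(\lambda-j+1)$; one further application then gives $ad_{M}^{\ell+1}(ad_{N}^{\ell}v_{0})=c_{\ell}\,[K,v_{0}]=0$, since $v_{0}$ is annihilated by $M=K$. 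I do not expect the computation itself to be the obstacle: both routes are short. The only point genuinely requiring care is the structural input from the previous theorem — that $ad_{N_{2}}^{m}K$ really sits at the top of a finite-dimensional, integer-weight $sl(2,\mathbb{C})$-module — because it is exactly the highest weight property $ad_{M}v_{0}=0$ together with finite-dimensionality that makes the weight $\lambda+2$ argument valid. That input is already supplied by the relations $[H,ad_{N_{2}}^{m}K]=(m+2)ad_{N_{2}}^{m}K$ and $[M,ad_{N_{2}}^{m}K]=0$, so once these are in hand the theorem is immediate.
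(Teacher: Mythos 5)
Your proof is correct, but it takes a genuinely different route from the paper. The paper proves the statement by induction on $\ell$: the base case is $[K,ad_{N_{2}}^{m}K]=0$ (these are symmetries by the master-symmetry theorem), and the inductive step pushes $ad_{K}^{\ell+2}$ through the bracket $[N,ad_{N}^{\ell}ad_{N_{2}}^{m}K]$ using a Leibniz-type manipulation together with the induction hypothesis; nothing about weights or irreducibility enters. Your argument instead treats the claim as pure $sl(2,\mathbb{C})$ weight theory: $ad_{N}^{\ell}v_{0}$ sits at weight $(m+2)-2\ell$, each $ad_{K}$ raises weight by $2$, so a nonzero image under $ad_{K}^{\ell+1}$ would sit at weight $m+4$, above the top of the module. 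The trade-off is worth noting. The paper's induction needs only the Jacobi identity and the inputs $[K,ad_{N_{2}}^{m}K]=0$, $[M,N]=H$, $[H,M]=2M$; in particular it never uses that $V_{m+3}$ is actually a finite-dimensional $ad$-invariant module. Your weight-counting version does lean on that closure and finite-dimensionality, which the paper asserts just before the theorem but never proves --- in a general module a highest weight vector of nonnegative integer weight can generate an infinite-dimensional Verma-type module, so this is a real hypothesis rather than a formality. Your second route, however, iterating $ad_{M}(ad_{N}^{k}v_{0})=k(\lambda-k+1)\,ad_{N}^{k-1}v_{0}$ and finishing with $ad_{M}v_{0}=[K,v_{0}]=0$, needs only the highest-weight data $[K,v_{0}]=0$, $[H,v_{0}]=(m+2)v_{0}$ and the Jacobi identity, so it is fully rigorous from exactly the inputs the paper supplies; you were right to flag it as the version that avoids appealing to irreducibility, and you should make it the primary argument rather than the cross-check. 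As a side benefit, it is also cleaner than the paper's own inductive step, whose displayed identity $ad_{K}^{\ell+1}[N,X]=[ad_{K}^{\ell+1}N,ad_{K}^{\ell+1}X]$ is not literally valid as written (the correct tool is the binomial Leibniz expansion of $ad_{K}^{\ell+2}[N,X]$, using $ad_{K}N=H$, $ad_{K}^{2}N=-2K$, $ad_{K}^{3}N=0$, after which the same conclusion follows).
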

  \begin{proof}
  The proof is by induction on $\ell$. For $\ell=0$ as all $ad_{N_{2}}^{m}K$ are symmetries of (\ref{BOO}), we have
 $$[K,ad_{N_{2}}^mK]=0\,.$$
Assume the statement holds for
$$ad_{K}^{\ell+1}(ad_{N}^{\ell}ad_{N_{2}}^{m}K)=0\,,$$
then we have
\begin{eqnarray*}
ad_{K}^{\ell+2}ad_{N}^{\ell+1}ad_{N}^{m}K&=&[K,ad_{K}^{\ell+1}[N,ad_{N}^{\ell}ad_{N_{2}}^{m}K]]  \\&=&[K,[ad_{K}^{\ell+1}N,ad_{K}^{\ell+1}ad_{N}^{\ell}ad_{N_{2}}^{m}K]]=0\,.
\end{eqnarray*}
  \end{proof}


 Hence each basis element, lying on the horizontal line, provides a generator for the time dependent symmetries of the  DDKP equation. Bellow we give two examples of such symmetries 
\begin{eqnarray*}
&&\exp(-tad_{K})(ad_{N}K)=[K,N]-t[K,[K,N]]=(-xu_{x}-u)+2t(u_{xx}+2uu_{x}+2\Theta(u_{x}))\,,
\\&&\exp(-tad_{K})(ad^2_{N}K)=ad^{2}_{N}K-t(ad_{K}ad_{N}^2K)+\frac{t^2}{2!}(ad_{K}^2ad_{N}^{2}K)=x-2t(xu_{x}-u)\\&&+4t^2(u_{xx}+2uu_{x}+2\Theta(u_{x}))\,.
\end{eqnarray*}

\section{Conclusion}\label{shish}
In this paper we showed how the master symmetry of the DDKP equation can produce an infinite hierarchy of commuting symmetries. We also demonstrated how systematically, using the Lie algebra $sl(2,\mathbb{C})$, one can obtain time dependent symmetries that are polynomial in $t$. The Lie algebra $sl(2,\mathbb{C})$ is constructed around the scaling symmetry and the master symmetry of a given equation. This structure can be also used for (1+1)-dimensional equations in which the equation, symmetries and the basis elements of $sl(2,\mathbb{C})$ are all local. 

\section*{Acknowledgements}
The author would like to express his sincere appreciation to his supervisor Dr J. P. Wang for her guidance and advice throughout this paper. I would like to thank Dr. G. Papamikos for many helpful discussions and comments. This work was also carried out during my PhD study in SMSAS at the University of Kent and I am grateful for supports received from the School.

\bibliography{kdv}
\end{document}